\documentclass[11pt]{article}

\usepackage[margin=1in]{geometry}
\usepackage{setspace}
\usepackage{microtype}

\usepackage{amsmath}
\usepackage{amssymb}
\usepackage{amsthm}
\usepackage{mathtools}
\usepackage{bm}

\usepackage{graphicx}
\usepackage{float}
\usepackage{booktabs}
\usepackage{array}
\usepackage{caption}
\usepackage{subcaption}
\usepackage{threeparttable}

\usepackage{tikz}
\usetikzlibrary{positioning,arrows.meta}

\usepackage{pgfplots}
\pgfplotsset{compat=1.18}

\usepackage{enumitem}

\usepackage[numbers]{natbib}
\usepackage{hyperref}

\hypersetup{
    colorlinks=true,
    linkcolor=blue,
    citecolor=blue,
    urlcolor=blue
}

\usepackage[T1]{fontenc}
\usepackage{lmodern}

\usepackage{authblk}

\newtheorem{proposition}{Proposition}
\newtheorem{lemma}{Lemma}

\newtheorem{remark}{Remark}
\newtheorem{definition}{Definition}

\title{\textbf{Changing the Game: Status-Quo Inertia, Institutional Design, and Equilibrium Transition}}

\title{Changing the Game: Status-Quo Inertia, Institutional Design, and Equilibrium Transition}

\author[1]{Madjid Eshaghi Gordji}
\author[2]{Esmaiel Abounoori}
\author[1,*]{Mohamadali Berahman}

\affil[1]{Faculty of Mathematics, Statistics and Computer Science, Semnan University, Semnan, Iran}
\affil[2]{Faculty of Economics, Semnan University, Semnan, Iran}

\affil[*]{Corresponding author: \texttt{mohamadali\_berahman@semnan.ac.ir}}

\usetikzlibrary{positioning}
\begin{document}

\maketitle


\begin{abstract}
Many economic interventions are designed as marginal changes in incentives. Yet in environments shaped by coordination, institutional persistence, and path dependence, such reforms often leave behavior largely unchanged. This paper studies interventions in games when equilibrium selection displays status-quo inertia: if the pre-intervention equilibrium remains a Nash equilibrium after policy, it continues to be selected. In such environments, price-based interventions and simple option expansion may fail even when they improve welfare in a partial-equilibrium sense. By contrast, interventions that modify the feasible action space---especially deletion and replacement interventions---can be substantially more effective because they remove the strategic basis for persistence. We develop a simple framework, derive general results, and provide complete proofs. We then illustrate the mechanism through stylized game-theoretic examples and selected applications from climate negotiations, disruptive innovation, military conflict, platform competition, financial reform, and technological rivalry. The examples clarify why addition-only reforms may leave the inherited equilibrium intact, while deletion and replacement can induce transition by eliminating or redesigning the actions that support the status quo. The analysis highlights a basic policy lesson: when inefficient equilibria are institutionally entrenched, the central problem is often not how to price the existing game more finely, but how to change the game itself.
\end{abstract}

\vspace{1em}

\noindent
\textbf{JEL Codes:}
C72, D02, D62, L51, Q58

\vspace{0.5em}

\noindent
\textbf{Keywords:}
equilibrium selection, status-quo inertia, institutional design, mechanism design, coordination failure, regulation, path dependence


\section{Introduction}

Since the foundational work of \citet{nash1950,nash1951}, game theory has provided economists with a language for understanding strategic interdependence. But from the beginning, one difficulty was apparent: equilibrium existence does not by itself identify the outcome an economy will reach. This concern led to a vast literature on equilibrium selection, focal points, and strategic refinement, from \citet{schelling1960} to \citet{harsanyi1988}. In parallel, mechanism design and implementation theory---shaped by the work of \citet{hurwicz1972}, \citet{myerson1982}, and \citet{maskin1999}---showed that outcomes depend not only on incentives, but also on the institutional structure of interaction.

This paper builds on those traditions while focusing on a practical issue that is central in modern policy environments: economies rarely choose among equilibria from a blank slate. They inherit an existing equilibrium. In many settings, that inherited outcome is supported by routines, legal continuity, sunk investments, complementary infrastructures, network effects, and shared expectations. As a consequence, even when a policy creates a better equilibrium, the economy may remain where it is.
We study interventions in games under \emph{status-quo inertia}. The idea is simple. If the pre-intervention equilibrium remains a Nash equilibrium after policy, it continues to be selected. This assumption is intentionally reduced-form, but it captures a broad class of phenomena that economists routinely observe: institutional stickiness, path dependence, lock-in, and equilibrium persistence.

The distinction matters because many standard interventions are designed as local changes in payoffs. Taxes, subsidies, fines, bonuses, and relative-price corrections all aim to improve incentives while leaving the set of feasible actions intact. In environments without serious coordination frictions, that may be enough. But when agents must move together, price corrections can be too weak to dislodge a prevailing equilibrium. A superior alternative may exist and still remain unrealized.

The paper's main claim is that in such environments, the decisive question is often not whether policy improves incentives, but whether policy changes the game deeply enough to eliminate the inherited equilibrium. This shifts attention from marginal payoff manipulation to institutional redesign. We compare four classes of interventions: price-only interventions, addition interventions, deletion interventions, and replacement interventions. The core results are direct. If the old equilibrium survives, it persists. This immediately limits the power of many taxes and subsidies. It also limits the effectiveness of reforms that merely add new options while leaving incumbent ones in place. By contrast, deletion and replacement interventions can induce transition because they undermine the strategic foundations of the old equilibrium. Replacement is often especially effective because it combines disruption with direction: it not only removes the old option but also helps coordinate behavior around a new one.

The economics is broad. In climate transition, carbon pricing may be desirable and still insufficient when incumbent technologies are embedded in complementary production systems. In digital markets, conduct remedies may leave platform architecture unchanged and therefore fail to alter the equilibrium. In financial reform, charging more for risk may be less effective than removing specific instruments or replacing them with safer contractual forms. In industrial modernization, subsidizing innovation may not move production networks if firms, suppliers, and workers remain coordinated on old standards. The same logic also appears in technological rivalry and business disruption. In the theory of disruptive innovation, established firms often improve existing products through sustaining innovations, while entrants change the technological trajectory itself. From the viewpoint developed here, sustaining innovation is close to an addition or price-based intervention, whereas disruptive innovation is naturally interpreted as a replacement intervention: the old technological path is not merely improved, but displaced by a new action space \citep{bower1995,christensen1997}.

Our contribution is threefold. First, we offer a compact framework that merges equilibrium selection with institutional persistence. Second, we provide a clear theoretical distinction between interventions that alter payoffs and interventions that alter feasible action spaces. Third, we use that distinction to clarify a practical policy lesson: under inertia, structurally targeted reforms may dominate softer incentive-based approaches.

The paper is also related to classic themes in institutional economics. \citet{north1990} emphasized that institutions persist because they stabilize expectations and organize repeated interaction. Our model translates that intuition into a strategic-form environment with equilibrium selection. More generally, the paper contributes to a growing recognition that many policy problems are transition problems. The relevant issue is not simply whether a better equilibrium exists, but how a society can move to it. For example, climate negotiations illustrate how institutional redesign may partially shift a strategic environment away from free-riding and toward coordinated participation; the Paris Agreement has been interpreted in this direction because it created a repeated pledge-and-review process and a focal framework for national commitments \citep{finus2020}.

The examples used in this paper are not intended as independent causal tests of the model. Rather, they serve as structured illustrations of the mechanism identified by the theory: whether an intervention leaves the inherited equilibrium intact, deletes an action that supports it, or replaces that action with a new focal alternative. This distinction helps organize examples from climate policy, disruptive innovation, military conflict, platform competition, financial reform, and technological rivalry without claiming that all such cases share the same institutional details.

The rest of the paper is organized as follows. Section 2 presents the formal framework, including the status-quo equilibrium, the post-intervention game, and the status-quo inertia selection rule. Section 3 defines the four intervention classes: price-only, addition, deletion, and replacement interventions. Section 4 states the main theoretical results. Section 5 provides stylized game-theoretic illustrations based on deletion and replacement. Section 6 presents a simple coordination example comparing price-only, addition, and replacement interventions within a common payoff environment. Sections 7 and 8 discuss illustrative applications from military and geopolitical cases, and from economic and technological rivalries, respectively. Section 9 examines boundary cases involving addition without deletion. Section 10 discusses broader implications and the relation to the literature. Section 11 concludes. Complete proofs are collected in the Appendix.

\section{Framework}

Consider a finite normal-form game
\[
G=\bigl(N,(A_i)_{i\in N},(u_i)_{i\in N}\bigr),
\]
where \(N=\{1,\ldots,n\}\) is the finite set of players, \(A_i\) is the finite action set of player \(i\), and
\[
A:=\prod_{i\in N}A_i
\]
is the set of action profiles. For each player \(i\in N\), the payoff function is
\[
u_i:A\to\mathbb{R}.
\]
An action profile is denoted by
\[
a=(a_i)_{i\in N}\in A.
\]
For each player \(i\in N\), let
\[
A_{-i}:=\prod_{j\in N\setminus\{i\}}A_j
\]
denote the set of action profiles of all players except \(i\). For \(a_i\in A_i\) and \(a_{-i}\in A_{-i}\), we write \((a_i,a_{-i})\in A\) for the corresponding full action profile.

A profile \(q\in A\) is a Nash equilibrium of \(G\) if, for every player \(i\in N\),
\[
u_i(q_i,q_{-i})\geq u_i(a_i,q_{-i})
\qquad
\text{for all } a_i\in A_i.
\]
The set of Nash equilibria of \(G\) is denoted by
\[
NE(G):=
\left\{
q\in A:
u_i(q_i,q_{-i})\geq u_i(a_i,q_{-i})
\text{ for all } i\in N
\text{ and all } a_i\in A_i
\right\}.
\]

\begin{definition}[Status-quo equilibrium]
A \emph{status-quo equilibrium} is a Nash equilibrium
\[
q^{-}\in NE(G)
\]
interpreted as the inherited or prevailing outcome before intervention.
\end{definition}

\noindent
\textbf{Economic example.}
In an energy market, \(q^{-}\) may correspond to an equilibrium in which producers, transport providers, equipment suppliers, and consumers all coordinate on fossil-based technologies. Even if a cleaner equilibrium exists, the incumbent one may remain the prevailing arrangement because it is supported by past investments, complementary infrastructure, and expectations.

\medskip

An intervention transforms the original game \(G\) into a post-intervention game
\[
G^I=\bigl(N,(A_i^I)_{i\in N},(u_i^I)_{i\in N}\bigr),
\]
with action space
\[
A^I:=\prod_{i\in N}A_i^I.
\]
The intervention may change payoffs, feasible actions, or both.

To capture persistence, let
\[
\sigma(\,\cdot\,;q^{-})
\]
be an equilibrium selection rule that assigns to each admissible post-intervention game \(H\) an equilibrium
\[
\sigma(H;q^{-})\in NE(H),
\]
whenever \(NE(H)\neq\varnothing\). The argument \(q^{-}\) records the inherited equilibrium of the pre-intervention game.

\begin{definition}[Status-quo inertia]
The equilibrium selection rule \(\sigma\) satisfies \emph{status-quo inertia} if, for every admissible post-intervention game \(H\),
\[
q^{-}\in NE(H)
\quad\Longrightarrow\quad
\sigma(H;q^{-})=q^{-}.
\]
\end{definition}

\noindent
\textbf{Economic example.}
A labor market may remain locked in a low-training equilibrium even after a training subsidy is introduced, provided firms still expect low skill investment and workers still expect weak demand for upgraded skills. The old equilibrium persists because it remains strategically self-confirming.

\medskip

The assumption is intentionally reduced-form. It does not claim that the status quo is always selected in every strategic environment. It says only that if the inherited equilibrium survives an intervention as a Nash equilibrium of the post-intervention game, then it retains selection priority. This captures, in a compact way, frictions such as focality, legal continuity, habit, switching costs, sunk investments, network effects, and coordination failure.


\section{Main Results}

We begin with the general persistence result.

\begin{proposition}[Persistence under status-quo inertia]
Let \(I\) be any intervention, and let \(G^I\) be the post-intervention game. If
\[
q^{-}\in NE(G^I),
\]
then
\[
\sigma(G^I;q^{-})=q^{-}.
\]
\end{proposition}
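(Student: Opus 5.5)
The proof is a direct instantiation of the definition of status-quo inertia, so the plan is short. I take as standing assumptions those already built into the framework. The selection rule \(\sigma(\,\cdot\,;q^{-})\) is the one fixed in Section~2, and it satisfies status-quo inertia. The post-intervention game \(G^I\) generated by the intervention \(I\) is an admissible post-intervention game, so it lies in the domain on which \(\sigma\) is defined and on which the inertia property is quantified.

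First, I check that \(\sigma(G^I;q^{-})\) is well defined. The hypothesis \(q^{-}\in NE(G^I)\) gives \(NE(G^I)\neq\varnothing\), so \(\sigma\) assigns an element of \(NE(G^I)\) to \(G^I\). This step also records implicitly that \(q^{-}\in A^I\): the inherited profile must remain feasible after the intervention for the hypothesis to make sense. That is exactly the feature deletion and replacement interventions will later break.

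Second, I apply the definition of status-quo inertia with \(H=G^I\). The definition says that for every admissible \(H\), \(q^{-}\in NE(H)\) implies \(\sigma(H;q^{-})=q^{-}\). Substituting \(H=G^I\) and using the hypothesis yields \(\sigma(G^I;q^{-})=q^{-}\), which is the claim.

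The only real obstacle is bookkeeping rather than mathematics: making explicit that \(G^I\) belongs to the admissible class, and that \(\sigma\) is assumed to satisfy inertia throughout this section. I would state both in one sentence at the start. I would also remark that nothing about the form of \(I\) is used, whether it is price-only, addition, deletion or replacement, which is why the proposition holds for \emph{any} intervention. It then serves as the lemma from which the later price-only and addition persistence results follow, once one verifies in each case that \(q^{-}\) survives as a Nash equilibrium of \(G^I\).
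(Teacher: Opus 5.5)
Your proposal is correct and matches the paper's proof: both instantiate the definition of status-quo inertia at \(H=G^I\), under the standing assumption that \(\sigma\) satisfies inertia and that \(G^I\) is admissible. Your extra bookkeeping, namely that \(NE(G^I)\neq\varnothing\) and that \(q^{-}\in A^I\), is harmless and makes explicit what the paper leaves implicit.
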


\noindent
\textbf{Economic example.}
Suppose a regulator imposes modest compliance penalties on a dominant digital platform, but the existing architecture still makes the old conduct pattern mutually optimal for users, advertisers, and the platform itself. Then the platform ecosystem may remain at the same equilibrium. The intervention affects incentives but not enough to dislodge the inherited strategic arrangement.

\medskip

Proposition 1 has an immediate implication for standard price-based reform.

\begin{proposition}[The limit of price-only intervention]
Suppose a price-only intervention produces a game \(G^P\) such that
\[
q^{-}\in NE(G^P).
\]
Then
\[
\sigma(G^P;q^{-})=q^{-}.
\]
\end{proposition}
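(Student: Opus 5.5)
The plan is to obtain Proposition 2 as a direct specialization of Proposition 1. A price-only intervention is, by its intended definition (formalized in the section on intervention classes), a transformation of \(G\) into \(G^P=\bigl(N,(A_i)_{i\in N},(u_i^P)_{i\in N}\bigr)\) that leaves every action set unchanged, \(A_i^P=A_i\), and modifies only the payoff functions. In particular, \(G^P\) is a post-intervention game \(G^I\) for the intervention \(I=P\). No structural feature of price interventions beyond this will be needed. The hypothesis \(q^{-}\in NE(G^P)\) is assumed outright rather than derived.

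The steps are as follows. First, I note that \(G^P\) is an admissible post-intervention game, so that the selection rule \(\sigma(\,\cdot\,;q^{-})\) is defined at \(G^P\). Since \(NE(G^P)\ni q^{-}\) is nonempty, \(\sigma(G^P;q^{-})\in NE(G^P)\) is well defined. Second, I apply Proposition 1 with \(I=P\). Its hypothesis \(q^{-}\in NE(G^I)\) is exactly the assumption of Proposition 2, so its conclusion gives \(\sigma(G^P;q^{-})=q^{-}\).

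If one prefers not to route through Proposition 1, the same conclusion follows in one line from the definition of status-quo inertia applied to \(H=G^P\).

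The only point requiring care, which I would flag as the main obstacle, is admissibility. One must make sure that the class of admissible post-intervention games on which \(\sigma\) is defined, and on which status-quo inertia is imposed, includes games produced by price-only interventions. Since \(A^P=A\), the profile \(q^{-}\) lies in the action space of \(G^P\), so the membership statement \(q^{-}\in NE(G^P)\) is meaningful. I would state explicitly that admissibility is understood to cover all four intervention classes, so that Proposition 1 applies verbatim. It may also be worth remarking that the hypothesis is automatically satisfied when the payoff changes preserve each player's best-response inequalities at \(q^{-}\), for instance when transfers are constant in own actions. However, that remark is not needed for the proof.
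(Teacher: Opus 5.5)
Your proposal is correct and matches the paper's proof: the paper also notes that a price-only intervention leaves action sets unchanged and then applies Proposition 1 directly, which is itself just the definition of status-quo inertia. Your point about admissibility is a reasonable clarification that the paper leaves implicit.
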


\noindent
\textbf{Economic example.}
A carbon tax may improve welfare comparisons across technologies, but if the incumbent fossil-based production system remains self-enforcing because suppliers, maintenance firms, infrastructure providers, and customers are all coordinated around it, then the economy may stay at the old equilibrium.

\medskip

Addition interventions face the same basic limitation.

\begin{proposition}[Addition may leave the old equilibrium intact]
Suppose an addition intervention produces a game \(G^A\) such that
\[
q^{-}\in NE(G^A).
\]
Then
\[
\sigma(G^A;q^{-})=q^{-}.
\]
\end{proposition}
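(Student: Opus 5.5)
The plan is to obtain Proposition 3 as a direct specialization of Proposition 1, just as Proposition 2 was obtained. An addition intervention transforms \(G\) into a post-intervention game \(G^A=\bigl(N,(A_i^A)_{i\in N},(u_i^A)_{i\in N}\bigr)\). Whatever its precise definition in the paper's classification (presumably \(A_i\subseteq A_i^A\) for every \(i\), with new actions available and incumbent ones retained), it is in particular an intervention in the sense of Section 2. Hence \(G^A\) is an admissible post-intervention game to which the selection rule \(\sigma(\,\cdot\,;q^{-})\) applies.

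First, I will note that the hypothesis \(q^{-}\in NE(G^A)\) guarantees \(NE(G^A)\neq\varnothing\), so \(\sigma(G^A;q^{-})\) is well defined. Second, I will invoke Proposition 1 with \(I\) the addition intervention and \(G^I=G^A\). Its hypothesis is exactly \(q^{-}\in NE(G^A)\), so it yields \(\sigma(G^A;q^{-})=q^{-}\). Equivalently, one could bypass Proposition 1 and apply the status-quo inertia definition directly with \(H=G^A\).

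I would add one remark to clarify what the statement does \emph{not} assert: it does not claim that \(q^{-}\) automatically survives addition. Since \(q^{-}\in A\subseteq A^A\) when incumbent actions are retained, \(q^{-}\) is at least a feasible profile of \(G^A\). Survival as a Nash equilibrium, however, additionally requires that no newly added action \(a_i\in A_i^A\setminus A_i\) be a profitable deviation against \(q^{-}_{-i}\), and, if payoffs change, that the old inequalities still hold. That is why the survival condition appears as a hypothesis.

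There is no genuine obstacle in this proof. The only point requiring care is confirming that an addition intervention qualifies as an admissible intervention, so that Proposition 1 (or status-quo inertia) applies to \(G^A\). This holds by construction of the intervention classes.
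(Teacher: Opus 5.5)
Your proposal is correct and matches the paper's proof, which likewise obtains the result by applying Proposition 1 (equivalently, the status-quo inertia definition) directly to \(G^A\). Your side remark about survival is also accurate: under the paper's definition, payoffs on \(A\) are unchanged, so survival of \(q^{-}\) reduces to the added action \(a'\) not being a profitable deviation for player \(k\) against \(q^{-}_{-k}\).
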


\noindent
\textbf{Economic example.}
A government may subsidize the introduction of a superior technology, but if incumbent firms expect their suppliers and customers to continue using the old system, adoption may remain limited. A better option exists, yet the economy does not move because the inherited equilibrium still survives.

\medskip

Deletion interventions are different because they can destroy the strategic support of the old equilibrium.

\begin{proposition}[Deletion can force transition]
Suppose a deletion intervention produces a game \(G^D\) such that
\[
q^{-}\notin NE(G^D).
\]
Assume further that \(G^D\) has a unique Nash equilibrium \(q^*\), and that \(q^*\) is efficient among the feasible post-intervention outcomes. Then
\[
\sigma(G^D;q^{-})=q^*.
\]
\end{proposition}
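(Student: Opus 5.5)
The argument uses only the definition of the selection rule $\sigma$. Status-quo inertia plays no role here, since its hypothesis $q^{-}\in NE(G^D)$ fails by assumption.

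\textbf{Step 1: $\sigma$ is defined at $G^D$.} By assumption $G^D$ has a unique Nash equilibrium $q^*$, so
\[
NE(G^D)=\{q^*\}\neq\varnothing .
\]
Hence $G^D$ is a game on which $\sigma(\,\cdot\,;q^{-})$ must output an element of $NE(G^D)$. I will state explicitly that $G^D$ is taken to be an admissible post-intervention game, as the framework implicitly does for every $G^I$.

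\textbf{Step 2: the selection is forced.} Since $\sigma(G^D;q^{-})\in NE(G^D)=\{q^*\}$, we get $\sigma(G^D;q^{-})=q^*$. This finishes the proof.

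\textbf{Remarks.} I will note that the hypothesis $q^{-}\notin NE(G^D)$ is what rules out the persistence conclusion of Proposition 1. It is in fact implied by uniqueness whenever $q^*\neq q^{-}$, and it also covers the case where $q^{-}\notin A^D$ because a component action was deleted. I will also note that the efficiency of $q^*$ is not used in the selection argument. It only serves the interpretation that the forced transition is to an efficient outcome. There is no genuine obstacle; the only point needing care is the admissibility of $G^D$ in Step 1.
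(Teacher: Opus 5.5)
Your proof is correct and matches the paper's: both rest on $\sigma(G^D;q^{-})\in NE(G^D)=\{q^*\}$, and both note that efficiency plays no role in the selection. Your side remarks are also accurate: the admissibility of $G^D$ is the only implicit assumption, and under uniqueness the hypothesis $q^{-}\notin NE(G^D)$ is equivalent to $q^*\neq q^{-}$.
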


\noindent
\textbf{Economic example.}
If a regulator bans a class of destabilizing short-term contracts and the remaining institutional environment admits a unique safer market equilibrium, then the system can move there. The policy succeeds not because it slightly changes incentives, but because it removes the old strategic option entirely.

\medskip

Replacement can be even more effective because it can destroy the old equilibrium while also creating a focal alternative.

\begin{proposition}[Replacement can force directed transition]
Suppose a replacement intervention produces a game \(G^R\) such that
\[
q^{-}\notin NE(G^R).
\]
Assume further that \(G^R\) has a unique Nash equilibrium \(q^*\), and that \(q^*\) is efficient among the feasible post-intervention outcomes. Then
\[
\sigma(G^R;q^{-})=q^*.
\]
\end{proposition}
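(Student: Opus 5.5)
The proof will be a direct unpacking of the definition of the selection rule \(\sigma\), in the same way as the proof of Proposition 4. Status-quo inertia plays no role here, because its hypothesis \(q^{-}\in NE(G^R)\) fails by assumption. The replacement structure of the intervention is not used either. All the work is done by the uniqueness of the Nash equilibrium in \(G^R\). The argument does not depend on how \(A_i^R\) is built from \(A_i\) (deletion of the incumbent action plus addition of a new one). It only uses that \(G^R\) is an admissible post-intervention game.

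First, I will note that \(NE(G^R)=\{q^*\}\) is nonempty. So \(G^R\) lies in the domain on which \(\sigma(\cdot\,;q^{-})\) is defined, and by the defining property of a selection rule we have \(\sigma(G^R;q^{-})\in NE(G^R)\). Second, uniqueness gives \(NE(G^R)=\{q^*\}\), so the selected profile must equal \(q^*\). This is the whole derivation. The efficiency of \(q^*\) among feasible post-intervention outcomes is not needed for the identity \(\sigma(G^R;q^{-})=q^*\). It only serves the interpretation that the forced transition is a directed transition to a desirable outcome, and I will say so explicitly in a remark after the proof. I will also remark that \(q^{-}\notin NE(G^R)\) follows from uniqueness whenever \(q^{-}\neq q^*\), or automatically when \(q^{-}\notin A^R\) because a replaced action has been removed. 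It is therefore stated only to stress the contrast with Propositions 1 to 3.

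The only real obstacle is a subtlety of definitions rather than mathematics. Status-quo inertia is formulated for games in which \(q^{-}\) may not even be a feasible profile, since replacement removes some \(q^{-}_i\) from \(A_i^R\). I will check that the selection rule is still well defined in that case, meaning that \(\sigma(H;q^{-})\) uses \(q^{-}\) only as a parameter and does not require \(q^{-}\in A^H\). Under the framework's wording ("assigns to each admissible post-intervention game \(H\) an equilibrium \(\sigma(H;q^{-})\in NE(H)\) whenever \(NE(H)\neq\varnothing\)"), this holds. I will therefore state the admissibility of \(G^R\) as the implicit standing assumption and conclude.
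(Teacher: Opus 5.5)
Your proposal is correct and matches the paper's proof: both conclude directly from $\sigma(G^R;q^{-})\in NE(G^R)=\{q^*\}$. The paper likewise notes that efficiency supplies only the welfare interpretation, while uniqueness of the Nash equilibrium delivers the formal selection result.
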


\noindent
\textbf{Economic example.}
In industrial policy, eliminating an obsolete standard may help, but replacing it with a common modern standard can be much stronger because it gives firms, suppliers, and workers a focal point for coordination. The reform does not simply break the old equilibrium; it helps organize the new one.

\medskip

The uniqueness assumptions in Propositions 4 and 5 are imposed for transparency. If the post-intervention game has multiple Nash equilibria, the conclusion requires an additional post-intervention selection criterion, such as efficiency, risk dominance, stochastic stability, or an explicitly specified refinement. The central point remains that transition requires the equilibrium protected by status-quo inertia to cease to survive.


\section{Types of Intervention}

We now define four intervention classes. The distinction among them concerns whether the intervention changes only payoffs, adds new actions, deletes old actions, or replaces old actions with new ones.

\begin{definition}[Price-only intervention]
A \emph{price-only intervention} changes payoffs while leaving all feasible action sets unchanged. Formally, the post-intervention game \(G^P\) satisfies
\[
A_i^P=A_i
\qquad
\text{for all } i\in N.
\]
There exist transfer functions
\[
T_i:A\to\mathbb{R}
\]
such that the post-intervention payoff functions are
\[
u_i^P(a)=u_i(a)+T_i(a),
\qquad
a\in A.
\]
\end{definition}

\noindent
\textbf{Economic example.}
Carbon taxes, investment subsidies, penalties for misconduct, and tradable permit prices are price-only interventions. They alter returns to existing actions but leave the action menu intact.

\medskip

\begin{definition}[Addition intervention]
An \emph{addition intervention} introduces a new feasible action without removing old actions. If \(a'\notin A_k\) is added for player \(k\), then the post-intervention action sets are
\[
A_k^A=A_k\cup\{a'\},
\qquad
A_i^A=A_i
\quad
\text{for all } i\neq k.
\]
The post-intervention action space is
\[
A^A:=\prod_{i\in N}A_i^A.
\]
The post-intervention payoff functions
\[
u_i^A:A^A\to\mathbb{R}
\]
coincide with the original payoff functions on the original action space \(A\), while payoffs involving the newly added action are specified by the intervention.
\end{definition}

\noindent
\textbf{Economic example.}
Introducing a green technology, creating a public digital payment rail, or offering a safer contractual default without eliminating incumbent arrangements are addition interventions.

\medskip

\begin{definition}[Deletion intervention]
A \emph{deletion intervention} removes an action from the feasible set of at least one player. If \(a\in A_k\) is deleted for player \(k\), then the post-intervention action sets are
\[
A_k^D=A_k\setminus\{a\},
\qquad
A_i^D=A_i
\quad
\text{for all } i\neq k.
\]
The post-intervention action space is
\[
A^D:=\prod_{i\in N}A_i^D.
\]
The post-intervention payoff functions
\[
u_i^D:A^D\to\mathbb{R}
\]
are inherited from the original payoff functions by restriction to \(A^D\).
\end{definition}

\noindent
\textbf{Economic example.}
Banning a highly polluting fuel, prohibiting a predatory contractual clause, or disallowing a destabilizing financial instrument are deletion interventions.

\medskip

\begin{definition}[Replacement intervention]
A \emph{replacement intervention} removes an old action and simultaneously introduces a new action for at least one player. If \(a\in A_k\) is replaced by \(a'\notin A_k\), then the post-intervention action sets are
\[
A_k^R=(A_k\setminus\{a\})\cup\{a'\},
\qquad
A_i^R=A_i
\quad
\text{for all } i\neq k.
\]
The post-intervention action space is
\[
A^R:=\prod_{i\in N}A_i^R.
\]
The post-intervention payoff functions
\[
u_i^R:A^R\to\mathbb{R}
\]
are inherited from the original game on profiles not involving the new action, while payoffs involving the new action are specified by the intervention.
\end{definition}

\noindent
\textbf{Economic example.}
Replacing a dirty production standard with a clean mandatory standard, substituting an opaque auction rule with a transparent one, or replacing a harmful financial contract with a regulated alternative are replacement interventions.

\medskip

\begin{remark}[Why replacement is often especially effective]
Pure deletion may destroy the status quo and still leave agents uncertain about which remaining equilibrium to coordinate on. Replacement often performs better because it combines removal of the old action with the creation of a salient new alternative. In environments with complementarities or network effects, this guidance can be decisive.
\end{remark}

\noindent
\textbf{Economic example.}
A ban on internal combustion buses may be disruptive if transit authorities and suppliers are unsure which technology to coordinate around. Replacing the old fleet standard with a unified electric standard is often more effective because it removes ambiguity as well as inertia.
\section{Stylized Game-Theoretic Illustrations}

The previous results are general. This section gives two stylized examples showing how deletion and replacement interventions operate in familiar game-theoretic environments. The examples are not intended to introduce new solution concepts. They are used only to clarify the mechanism by which an intervention can eliminate the inherited equilibrium or redirect coordination toward a new one.

\subsection{Deleting defection in a Prisoner's Dilemma}

Consider a two-player Prisoner's Dilemma in which each player chooses either cooperation \(C\) or defection \(D\). The payoff matrix is
\[
\begin{array}{c|cc}
 & C & D\\
\hline
C & 3,3 & 0,4\\
D & 4,0 & 1,1
\end{array}
\]
The unique Nash equilibrium is
\[
q^{-}=(D,D),
\]
even though \((C,C)\) is Pareto superior. In this case, the inefficient equilibrium is sustained because defection is individually optimal for each player, given the action of the other player.

Now suppose an institutional designer removes the action \(D\) from the feasible action set of each player. This is a deletion intervention. The post-intervention action sets become
\[
A_1^D=A_2^D=\{C\}.
\]
Hence the only feasible action profile is
\[
(C,C).
\]
The original status-quo equilibrium \((D,D)\) is no longer feasible and therefore cannot remain a Nash equilibrium of the post-intervention game. If \((C,C)\) is the unique Nash equilibrium of the reduced game, Proposition 4 implies that the selected equilibrium becomes
\[
\sigma(G^D;q^{-})=(C,C).
\]
This example shows the basic logic of deletion: the transition is not caused by a marginal change in payoffs, but by the removal of the action that sustained the inefficient status quo.

\subsection{Replacement in a Stag-Hunt-type coordination problem}

Consider a coordination problem with two actions: a safe incumbent action \(H\) and a superior coordinated action \(S\). The payoff matrix is
\[
\begin{array}{c|cc}
 & H & S\\
\hline
H & 2,2 & 1,0\\
S & 0,1 & 4,4
\end{array}
\]
There are two pure Nash equilibria:
\[
(H,H)
\qquad \text{and} \qquad
(S,S).
\]
The equilibrium \((S,S)\) is Pareto superior, but suppose the inherited status quo is
\[
q^{-}=(H,H).
\]
Under status-quo inertia, the selected equilibrium remains \((H,H)\) as long as it remains a Nash equilibrium of the post-intervention game.

A replacement intervention removes the incumbent action \(H\) and introduces a new coordinated action \(M\), interpreted as a modernized or institutionally supported alternative. The new action is designed to serve the same functional role as the old one, but under a different institutional or technological standard. Formally, for each player,
\[
A_i^R=(A_i\setminus\{H\})\cup\{M\}.
\]
The inherited equilibrium \((H,H)\) is no longer feasible. If the replacement is designed so that the post-intervention game has a unique Nash equilibrium \(q^*\), and \(q^*\) is efficient among the feasible post-intervention outcomes, then Proposition 5 yields
\[
\sigma(G^R;q^{-})=q^*.
\]
This example illustrates why replacement can be stronger than deletion alone. Deletion removes the old equilibrium, but replacement also gives players a new focal alternative around which expectations can coordinate.

\section{A Simple Coordination Example}

The preceding examples show the logic of deletion and replacement in familiar game-theoretic settings. We now give a minimal numerical coordination example that compares price-only, addition, and replacement interventions within the same payoff environment.

Consider a two-player coordination game in which each player chooses the old action \(O\) or the new action \(N\). Payoffs are
\[
\begin{array}{c|cc}
 & O & N\\
\hline
O & 3,3 & 0,1\\
N & 1,0 & 5,5
\end{array}
\]

There are two pure Nash equilibria:
\[
(O,O) \quad \text{and} \quad (N,N).
\]

Suppose
\[
q^{-}=(O,O)
\]
is the inherited status quo. Under status-quo inertia, the selected equilibrium is $(O,O)$ even though $(N,N)$ is Pareto superior.

Now compare three interventions.

\begin{enumerate}[label=(\roman*)]

\item \textbf{Price-only intervention.}
Suppose a small subsidy raises the payoff from choosing $N$. If the subsidy is not large enough to make $O$ unattractive when the other player chooses $O$, then $(O,O)$ remains a Nash equilibrium. By Proposition 2, it remains selected.

\item \textbf{Addition intervention.}
Suppose a new action $N'$ is introduced with attractive payoffs if both players adopt it. If players can still choose $O$ and $(O,O)$ remains an equilibrium, then Proposition 3 implies that inertia preserves the old outcome.\item \textbf{Replacement intervention.}
Suppose $O$ is removed and replaced by $N'$. Then $(O,O)$ is no longer feasible. If the resulting game has a unique Nash equilibrium \(q^*\), and \(q^*\) is efficient among the feasible post-intervention outcomes, the economy transitions to \(q^*\) by Proposition 5.

\end{enumerate}

Figure 1 summarizes the logic.

\begin{figure}[H]
\centering
\begin{tikzpicture}[
    node distance=2.2cm and 2.8cm,
    >=stealth,
    thick,
    box/.style={
        draw,
        rounded corners,
        align=center,
        text width=4.2cm,
        minimum height=1.2cm
    }
]

\node[box] (sq) {Status quo\\$q^{-}$};

\node[box, right=of sq] (survive)
{If $q^{-}\in NE(G_I)$\\inertia preserves $q^{-}$};

\node[box, below=of survive] (break)
{If $q^{-}\notin NE(G_I)$\\transition becomes possible};

\node[box, left=of break] (eff)
{If a unique Nash equilibrium $q^{*}$ exists\\selection moves to $q^{*}$};

\draw[->] (sq) -- node[above] {intervention} (survive);

\draw[->] (sq) -- node[left, text width=2.5cm, align=center] {old equilibrium destroyed} (eff);

\draw[->] (eff) -- (break);

\end{tikzpicture}

\caption{Status-quo inertia and equilibrium transition}
\end{figure}

Table 1 compares the intervention classes.

\begin{table}[H]
\centering

\caption{Interventions under status-quo inertia}

\small

\begin{tabular}{p{2.2cm}p{2.8cm}p{4cm}p{4.2cm}}

\toprule

Intervention &
What changes &
Typical failure mode &
Typical success condition \\

\midrule

Price-only &
Payoffs only &
The old equilibrium remains strategically viable &
Relative prices become strong enough to eliminate $q^{-}$ \\

Addition &
New actions are added, old ones remain &
Agents continue coordinating on incumbent actions &
The new option also undermines the old equilibrium \\

Deletion &
Old actions are removed &
Post-reform coordination may remain ambiguous &
The deleted action is essential for $q^{-}$ and the remaining game has a clear efficient equilibrium \\

Replacement &
Old actions are removed and new ones introduced &
Poor design can create a new coordination problem &
The reform both destroys $q^{-}$ and creates a focal efficient alternative \\

\bottomrule

\end{tabular}

\vspace{0.3em}

\begin{minipage}{0.95\textwidth}
\footnotesize
\textit{Notes:} The critical margin is whether the inherited equilibrium survives in the intervened game.
\end{minipage}

\end{table}

\section{Illustrative Applications: Military and Geopolitical Cases}

The preceding sections developed the formal mechanism and illustrated it through stylized games. This section provides selected applications showing how the same distinction between price-only, addition, deletion, and replacement interventions can organize real-world strategic transitions. The cases are not presented as independent causal tests of the theory. Rather, they are used as structured illustrations of the mechanism: whether an intervention leaves the inherited equilibrium intact, deletes an action that supports it, or replaces that action with a new focal alternative.

\subsection{The atomic bomb and the end of the Pacific War}

The atomic bombings of Hiroshima and Nagasaki in August 1945 provide an extreme example of a replacement-type intervention in a military setting. Before the intervention, the Japanese leadership still faced a strategic environment in which continued resistance, conditional surrender, and internal bargaining over terms remained feasible. The introduction of atomic warfare changed that environment by replacing the existing military-strategic calculus with a new one in which continued resistance carried a qualitatively different level of expected destruction. In the terminology of the present framework, the intervention did not merely alter a payoff at the margin; it changed the action space facing the Japanese leadership. The old strategic arrangement could no longer be sustained in the same form. Historical accounts emphasize that the bombings strengthened the position of the peace faction and contributed to the Emperor's decision to accept surrender \citep{asada1998}. This case therefore illustrates the replacement logic of Proposition 5: an inherited equilibrium can collapse when the strategic option supporting it is removed or transformed and a new outcome becomes focal.

\subsection{The Six-Day War and the deletion of air capability}

The opening phase of the Six-Day War in 1967 illustrates a deletion intervention. Israel's preemptive air operation targeted Egyptian air assets on the ground, thereby removing a key military action from the feasible set of the opposing coalition: the ability to conduct an effective conventional air campaign. In the language of the model, the relevant status quo depended on the availability of air cover and air retaliation. Once that action was removed, the previous strategic configuration could no longer function as before. Oren's account emphasizes the centrality of the June 5 airstrike in reshaping the battlefield within the first hours of the conflict \citep{oren2002}. The case therefore corresponds to Proposition 4: deleting an action that is essential to the inherited equilibrium can force transition to a different strategic state.

\subsection{The Yom Kippur War and the replacement of battlefield conditions}

The Yom Kippur War of 1973 provides a replacement-type illustration. After 1967, Israeli strategy relied heavily on the expectation of air superiority and armored mobility. Egyptian and Syrian forces, however, entered the 1973 war with Soviet-supplied surface-to-air missiles and anti-tank weapons. These weapons did not merely add another instrument to the battlefield; they changed the strategic environment in which air and armored forces operated. The previous action ``uncontested air and armor dominance'' was replaced by a battlefield in which those actions became costly and vulnerable. Rabinovich describes how these weapons contributed to unexpected Israeli losses in the early days of the war \citep{rabinovich2004}. In the present framework, this is a replacement intervention because the old operational equilibrium was displaced by a new action space with different constraints and expectations.

\subsection{The Gulf War and the deletion of centralized command}

The 1991 Gulf War provides another deletion illustration. Iraq's military position depended heavily on centralized command, control, communications, and intelligence infrastructure. The coalition air campaign targeted these nodes early in Operation Desert Storm. Once command-and-control capacity was degraded, the Iraqi military could no longer coordinate in the same way. The action ``centralized command'' was therefore removed or severely weakened as a feasible strategic action. Official accounts of the air campaign emphasize the importance of disabling Iraq's command-and-control network \citep{afhistory1991}. This case fits the deletion logic of Proposition 4: when the action supporting the inherited equilibrium is removed, the old strategic state becomes unsustainable.

\subsection{Stuxnet and cyber-physical replacement}

The Stuxnet operation illustrates a replacement intervention in a cyber-physical environment. The pre-existing strategic arrangement was based on the possibility of continuing uranium enrichment under conventional forms of monitoring, secrecy, and physical security. Stuxnet introduced a different action space by linking cyber intrusion to physical degradation of centrifuge equipment. In this sense, the intervention did not simply impose a cost on an existing action; it replaced the strategic environment of enrichment with one in which digital interference could produce physical consequences. Zetter's account describes Stuxnet as a digital weapon with effects beyond ordinary malware \citep{zetter2014}. In the language of Proposition 5, the inherited equilibrium was disrupted because an old action was no longer available under the same technological assumptions, while a new cyber-physical strategic environment became salient.

\subsection{Energy sanctions against Russia and replacement of supply dependence}

The energy sanctions and supply reconfiguration following Russia's 2022 invasion of Ukraine provide a geopolitical example of replacement. Before the invasion, parts of Europe were locked into an energy equilibrium based on long-standing dependence on Russian oil and natural gas. Sanctions and import restrictions did not merely raise the price of that dependence; they progressively removed the action ``continue normal energy trade with Russia'' from the feasible policy set of European governments. At the same time, alternative sources, including liquefied natural gas and accelerated renewable-energy investment, became focal substitutes. This case therefore combines deletion of the old strategic option with replacement by a new energy-security framework. Analyses of Europe's post-2022 gas strategy emphasize both the shift away from Russian pipeline dependence and the difficulty of replacing it with long-term LNG arrangements \citep{baker2023}. In the present framework, this illustrates a gradual replacement process rather than a simple price-only adjustment. The case illustrates the replacement logic of Proposition 5, while also showing that replacement may be gradual rather than instantaneous.


\section{Illustrative Applications: Economic and Technological Rivalries}

The same logic also appears in market competition and technological change. In these settings, status-quo inertia is often sustained by consumer habits, complementary assets, distribution systems, platform expectations, installed bases, and technological standards. The following cases illustrate how deletion and replacement interventions can reshape market equilibria, while also showing that such interventions are not necessarily welfare-improving in every case.

\subsection{Netflix and Blockbuster}

The rivalry between Netflix and Blockbuster illustrates replacement in a consumer market. Blockbuster's status quo was based on physical video rental stores, local inventory, and late fees. Netflix initially entered through DVD-by-mail, but the major equilibrium shift occurred when streaming became the dominant mode of access. Streaming did not merely add another rental channel; it replaced the physical-store logic with an on-demand digital distribution model. In the terminology of this paper, the action `rent a physical copy from a local store'' was progressively displaced by the action `stream content directly.'' Blockbuster's attempt to adapt while preserving much of the old architecture was closer to an addition intervention, whereas Netflix's model ultimately changed the action space of the market \citep{shih2014}. This illustrates Proposition 5: replacement can defeat inertia when it removes the strategic basis of the incumbent equilibrium and creates a new focal mode of coordination.

\subsection{The Pepsi Challenge}

The Pepsi Challenge illustrates a non-physical form of deletion. In the soft-drink market, Coca-Cola's status quo was supported not only by taste but also by brand loyalty, habit, and consumer identification. The blind taste-test format removed, at least temporarily, the action ``choose by brand recognition.'' Consumers were placed in a setting where brand identity was hidden and taste comparison became the salient action. This did not delete a physical product, but it deleted an informational or psychological support of the inherited equilibrium. In the framework developed here, the case shows that deletion can operate through the removal of cues, defaults, or informational structures that sustain status-quo behavior \citep{gladwell2005}.

\subsection{The iPhone and the displacement of incumbent smartphone designs}

The introduction of the iPhone provides a replacement-type illustration in technological rivalry. Before 2007, leading smartphone designs were organized around physical keyboards, carrier control, and relatively closed software environments. The iPhone replaced this architecture with a touchscreen interface, a software-centered device, and an application ecosystem. This was not merely an incremental improvement to the existing smartphone format. It changed the action space for consumers, developers, and manufacturers. The incumbent equilibrium based on keyboard-centered smartphones became increasingly unstable because the market began coordinating around a different technological standard \citep{naughton2017}. In the language of Proposition 5, the replacement succeeded because it both weakened the old equilibrium and supplied a new focal alternative.

\subsection{Samsung's TRX Malaysia campaign}

Samsung's campaign around The Exchange TRX in Malaysia provides a retail and marketing illustration of replacement. The relevant status quo was an Apple-centered customer journey around the opening of Apple's first official store in Malaysia. Apple opened its first retail store in Malaysia at The Exchange TRX in June 2024, while reports described Samsung's use of the adjacent MRT station as a separate branded advertising space around the Galaxy line \citep{edge2024}. The intervention did not merely add a competing advertisement inside the same controlled space; it replaced the neutral transit environment with a Samsung-branded entry point. This case illustrates how replacement may operate in spatial and attention-based competition: the strategic action space facing consumers was altered before they reached the incumbent's controlled retail environment.

\subsection{Tesla and legacy automakers}

Tesla's rise illustrates gradual replacement rather than instantaneous deletion. The incumbent automotive equilibrium was based on internal-combustion engines, dealership networks, periodic model-year improvements, and limited post-sale software change. Tesla introduced a different architecture: electric powertrains, direct sales, software-centered vehicle management, and over-the-air updates. This did not immediately eliminate the old equilibrium, but it progressively weakened it by creating a new focal standard for vehicle design and consumer expectations. In this sense, the Tesla case fits the replacement logic of the model with an important qualification: in many industrial transitions, replacement is dynamic and partial before it becomes dominant \citep{stringham2015}.

\subsection{Intel and AMD}

The Intel--AMD case illustrates that deletion is not inherently welfare-improving. The European Commission's Intel decision described conditional rebates and purchasing arrangements that restricted the practical ability of original equipment manufacturers to source CPUs from AMD, including arrangements under which HP was required to purchase no less than 95 percent of its business desktop CPU needs from Intel \citep{ecintel2009}. In the present terminology, this can be interpreted as an anti-competitive deletion of the action ``purchase from AMD'' from the effective choice set of some downstream firms. Unlike the regulatory examples discussed earlier, this deletion did not remove an inefficient or harmful action for social benefit; rather, it restricted competition. The case is therefore useful because it clarifies a boundary of the theory: deletion and replacement are powerful mechanisms for changing equilibria, but their welfare consequences depend on which actions are removed, who removes them, and what equilibrium replaces the status quo.


\section{Boundary Cases: Addition Without Deletion}

The previous applications mainly illustrate deletion and replacement. It is also important to examine addition-only interventions, because Proposition 3 implies that adding a new action does not by itself eliminate the inherited equilibrium. The following cases illustrate both the weakness of addition-only reform and a boundary condition under which addition can partially succeed.

\subsection{Blockbuster's mail service}

Blockbuster's mail-order DVD service illustrates an addition intervention that failed to overturn the incumbent equilibrium. By adding a mail-delivery option while preserving much of its store-based business model, Blockbuster expanded the action set available to consumers without fully removing the old rental architecture. The inherited equilibrium based on local stores, physical inventory, and late-fee-based routines remained strategically viable. In the terminology of Proposition 3, the new action did not eliminate the old equilibrium. The case therefore helps distinguish addition from replacement: Netflix's eventual streaming model changed the market's action space, whereas Blockbuster's mail service largely coexisted with the old one \citep{shih2014}.

\subsection{Windows Phone}

Windows Phone provides another example of addition without sufficient equilibrium displacement. Microsoft introduced a new smartphone platform into a market already coordinated around iOS and Android. The intervention added another feasible action for consumers, developers, and device manufacturers, but it did not remove the network effects supporting the incumbent platforms. Because application developers, users, and manufacturers continued to coordinate around the existing ecosystems, the old equilibrium remained intact. This case is consistent with Proposition 3: addition alone may fail when the inherited equilibrium is sustained by strong complementarities and expectations \citep{idc2017}.

\subsection{Zelle and institutionally anchored addition}

Zelle illustrates a more favourable boundary case for addition. Unlike many addition-only interventions, Zelle was embedded directly inside existing bank accounts and mobile banking applications. The new payment option did not remove older payment methods, such as cards, cash, checks, PayPal, or other peer-to-peer systems. However, it was institutionally anchored to an infrastructure that users already relied on. This reduced adoption frictions and gave the added action a built-in coordination base. The case therefore qualifies Proposition 3: addition is unlikely to displace an inherited equilibrium when it is merely placed beside old options, but it can become more effective when the new action is tied to an unavoidable or highly trusted institution \citep{zelle2024}.

\section{Discussion and Relation to the Literature}

The framework developed here is related to three major traditions in economics and game theory.

First, it belongs to the theory of equilibrium selection. \citet{schelling1960} stressed the role of salience and focal points, while \citet{harsanyi1988} proposed formal criteria for choosing among multiple equilibria. Our focus is narrower and more institutional: when an equilibrium is inherited from the past, its survival in the post-intervention game may itself determine selection. The examples above illustrate this mechanism in different strategic environments. The common feature is not that the cases share identical institutional details, but that each can be organized around the same question: does the intervention leave the inherited equilibrium intact, or does it remove the action that supports it?

Second, the paper is closely related to mechanism design and implementation theory. Following \citet{hurwicz1972}, economists came to understand institutions as systems that structure information and incentives. \citet{myerson1982} and \citet{maskin1999} further clarified how desired outcomes can be induced through carefully designed rules. The present argument complements that literature by emphasizing a practical transition problem. In environments with status-quo inertia, improving payoffs may be insufficient if the old equilibrium remains feasible. What matters is not only whether the target outcome is desirable, but whether the institutional design removes or replaces the strategic basis of the inherited outcome.

Third, the analysis connects naturally to institutional economics and path dependence. \citet{north1990} argued that institutions persist because they shape expectations and lower the uncertainty of repeated interaction. Our model gives that claim a strategic interpretation. The status quo survives not merely because it exists, but because it remains an equilibrium around which agents can coordinate. This perspective helps explain why many real-world reforms disappoint. In climate policy, a tax may be welfare-improving and still too weak to move the economy away from incumbent technologies. In digital markets, conduct rules may leave the platform's equilibrium logic intact. In finance, higher charges for risky behavior may be less effective than banning a contract form or replacing it with a safer regulated alternative. In industrial modernization, a subsidy for adoption may fail if the entire supply network remains coordinated on old standards.

The applications also clarify the distinction between illustration and proof. The military, geopolitical, and business cases are not presented as formal empirical tests or causal identification exercises. They are used to show how the proposed categories can discipline the interpretation of strategic transitions. In some cases, such as deletion of battlefield capabilities or restriction of contractual forms, the relevant mechanism is the removal of an action. In others, such as disruptive innovation, streaming media, smartphone redesign, or energy-supply diversification, the mechanism is closer to replacement: the old action space is not merely priced differently, but reorganized around a new focal alternative.

The boundary cases involving addition-only interventions are especially important. They show that the theory is not simply a claim that every new option succeeds. Blockbuster's mail service and Windows Phone illustrate the weakness of adding a new action while incumbent routines, expectations, and network effects remain intact. Zelle illustrates a more favourable boundary condition: addition can become more effective when the new action is embedded in an institution that users already rely on. This qualification matters because it prevents the framework from overstating the power of deletion and replacement or understating the possible role of institutionally anchored addition.

The practical implication is a decision rule for policy and strategy. If the inherited equilibrium is inefficient, the first question is whether it remains a Nash equilibrium after the proposed intervention. If it does, status-quo inertia predicts persistence. The second question is whether the intervention deletes an action essential to the old equilibrium or replaces it with a focal alternative. If it does not, the intervention may improve local incentives without inducing transition. If it does, transition becomes possible, although the welfare effect depends on what new equilibrium replaces the old one. The Intel--AMD example is a useful warning: deletion can be anti-competitive if it removes a socially valuable action. Thus, the framework identifies a mechanism of transition, not an automatic welfare theorem.

More broadly, the paper suggests that a central task of modern applied game theory is to study transition under persistence. Economists often know what the better equilibrium looks like. The harder problem is to understand how institutions can move the economy there. Under status-quo inertia, the answer often requires changing the feasible action space rather than merely changing marginal incentives.
\section{Conclusion}

This paper has studied interventions in games when equilibrium selection exhibits status-quo inertia. The main message is simple but consequential. If the inherited equilibrium remains a Nash equilibrium after intervention, it remains selected. This sharply limits the effectiveness of many price-based and soft-option reforms. By contrast, deletion and replacement interventions can induce transition because they alter the feasible action space and thereby remove the strategic basis for persistence.

The analysis developed this idea in a finite normal-form framework, distinguished four intervention classes, and derived persistence and transition results under status-quo inertia. The stylized game-theoretic examples showed how deletion can eliminate an inefficient defection equilibrium and how replacement can redirect coordination by removing an old action and introducing a new focal alternative. The simple coordination example then compared price-only, addition, and replacement interventions within a single payoff environment.

The illustrative applications extended the same logic to military conflict, geopolitical adjustment, market competition, technological rivalry, and financial or contractual settings. These examples are not offered as independent causal tests of the theory. Rather, they show how the distinction between changing payoffs and changing feasible actions can organize a range of strategic transitions. The boundary cases involving addition-only interventions further clarify the limits of the framework: adding a new option may fail when the old equilibrium remains viable, but addition can become more effective when it is institutionally anchored to an existing and widely used infrastructure.

The broader lesson is that policy design in strategic environments must take institutional entrenchment seriously. It is often not enough to improve incentives at the margin. When inefficient equilibria are sustained by coordination, expectations, routines, and inherited structures, the decisive intervention may be one that changes the game itself. At the same time, deletion and replacement are not automatically welfare-improving. Their consequences depend on which actions are removed, which alternatives are introduced, and whether the post-intervention equilibrium is socially desirable.

Future work could enrich the framework in several directions: dynamic adjustment, stochastic stability, endogenous political resistance to game-changing interventions, and empirical applications with explicit identification strategies. A further extension would be to study partial replacement, where the old equilibrium is weakened gradually rather than eliminated immediately. But the central point is already clear. In economies and institutions with inertia, successful reform is often structural before it is marginal..

\appendix

\section*{Appendix: Proofs}

\subsection*{Preliminary observation}

Let \(\sigma\) satisfy status-quo inertia. By definition, for any admissible post-intervention game \(H\),
\[
q^{-}\in NE(H)
\quad\Longrightarrow\quad
\sigma(H;q^{-})=q^{-}.
\]

\begin{lemma}
For any admissible post-intervention game \(H\), if
\[
q^{-}\in NE(H),
\]
then
\[
\sigma(H;q^{-})=q^{-}.
\]
\end{lemma}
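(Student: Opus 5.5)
The plan is to derive the lemma directly from the standing hypothesis that \(\sigma\) satisfies status-quo inertia, as recorded in the preliminary observation. No game-theoretic computation is needed. The statement is exactly the defining implication of status-quo inertia, applied to an arbitrary admissible post-intervention game \(H\). The proof will therefore be a one-step instantiation of that definition.

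Concretely, I will proceed in the following order.

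First, I fix an arbitrary admissible post-intervention game \(H=\bigl(N,(A_i^H)_{i\in N},(u_i^H)_{i\in N}\bigr)\) and assume \(q^{-}\in NE(H)\).

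Second, I note that this forces \(NE(H)\neq\varnothing\). Hence \(\sigma(H;q^{-})\) is well defined as an element of \(NE(H)\) by the definition of a selection rule. It also forces \(q^{-}\in A^H\); this feasibility point matters in the later applications to deletion and replacement.

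Third, I invoke the definition of status-quo inertia, which quantifies over every admissible \(H\). Specializing the universally quantified implication to this particular \(H\) yields \(\sigma(H;q^{-})=q^{-}\). Since \(H\) was arbitrary, the lemma follows.

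The only point requiring any care is well-definedness: the value \(\sigma(H;q^{-})\) must exist before we assert an identity about it. That is handled by the nonemptiness observation in the second step. Beyond this there is no real obstacle.

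I would close with a remark that Propositions 1--3 follow immediately from the lemma by taking \(H=G^I\), \(G^P\), and \(G^A\) respectively. For Propositions 4 and 5, the conclusion instead comes from the fact that \(\sigma(H;q^{-})\in NE(H)=\{q^*\}\). The lemma is not needed there, since \(q^{-}\notin NE(H)\).
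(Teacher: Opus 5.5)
Your proposal is correct and matches the paper's proof, which simply notes that the lemma is the defining implication of status-quo inertia applied to the given \(H\). Your extra well-definedness remark (that \(q^{-}\in NE(H)\) forces \(NE(H)\neq\varnothing\), so \(\sigma(H;q^{-})\) exists) is a harmless refinement that the paper leaves implicit.
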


\begin{proof}
This follows immediately from the definition of status-quo inertia.
\end{proof}

\subsection*{Proof of Proposition 1}

\begin{proof}
Let \(I\) be any intervention and let \(G^I\) denote the resulting post-intervention game. Assume
\[
q^{-}\in NE(G^I).
\]
By status-quo inertia, whenever the inherited equilibrium remains a Nash equilibrium in the post-intervention game, it is selected. Therefore,
\[
\sigma(G^I;q^{-})=q^{-}.
\]
\end{proof}

\subsection*{Proof of Proposition 2}

\begin{proof}
A price-only intervention changes payoff functions but leaves all action sets unchanged. Let \(G^P\) denote the resulting post-intervention game and suppose
\[
q^{-}\in NE(G^P).
\]
Then Proposition 1 applies directly, yielding
\[
\sigma(G^P;q^{-})=q^{-}.
\]
Hence a price-only reform fails to change the selected equilibrium whenever the inherited status quo remains a Nash equilibrium after the payoff change.
\end{proof}

\subsection*{Proof of Proposition 3}

\begin{proof}
Let \(G^A\) be the game generated by an addition intervention. By assumption,
\[
q^{-}\in NE(G^A).
\]
The addition of new actions does not alter selection under status-quo inertia so long as the inherited equilibrium remains a Nash equilibrium. Therefore, by Proposition 1,
\[
\sigma(G^A;q^{-})=q^{-}.
\]
\end{proof}

\subsection*{Proof of Proposition 4}

\begin{proof}
Let \(G^D\) denote the game after a deletion intervention. By assumption,
\[
q^{-}\notin NE(G^D).
\]
Assume further that \(G^D\) has a unique Nash equilibrium \(q^*\), and that \(q^*\) is efficient among the feasible post-intervention outcomes. Since the equilibrium selection rule satisfies
\[
\sigma(G^D;q^{-})\in NE(G^D),
\]
and since \(NE(G^D)=\{q^*\}\), it follows that
\[
\sigma(G^D;q^{-})=q^*.
\]
The efficiency statement is not needed for uniqueness of selection; it records the welfare interpretation of the selected post-intervention equilibrium.
\end{proof}

\subsection*{Proof of Proposition 5}

\begin{proof}
Let \(G^R\) denote the game after a replacement intervention. By assumption,
\[
q^{-}\notin NE(G^R).
\]
Assume further that \(G^R\) has a unique Nash equilibrium \(q^*\), and that \(q^*\) is efficient among the feasible post-intervention outcomes. Since the equilibrium selection rule satisfies
\[
\sigma(G^R;q^{-})\in NE(G^R),
\]
and since \(NE(G^R)=\{q^*\}\), it follows that
\[
\sigma(G^R;q^{-})=q^*.
\]
Again, efficiency gives the welfare interpretation of the transition, while uniqueness of the Nash equilibrium gives the formal selection result.
\end{proof}

\subsection*{On the role of uniqueness}

The uniqueness assumptions in Propositions 4 and 5 are imposed for transparency. If the post-intervention game has multiple Nash equilibria, the conclusion requires an additional post-intervention selection criterion, such as efficiency, risk dominance, stochastic stability, or an explicitly specified refinement. The central message remains the same: transition requires that the equilibrium protected by status-quo inertia cease to survive.

\section*{Acknowledgments}

The authors would like to thank colleagues and reviewers for their valuable comments and constructive feedback, which helped improve the clarity and presentation of the paper.
\section*{Funding}

The authors received no external funding for this research.
\section*{Conflict of Interest}

The authors declare that they have no conflict of interest.

\section*{AI Assistance Statement}

The authors used artificial intelligence tools solely for language improvement, grammar correction, and stylistic refinement of the manuscript. All mathematical definitions, theoretical concepts, formal results, proofs, and original scientific contributions were developed independently by the authors. The intellectual content and core ideas of the paper are entirely original and belong to the authors.

\bibliographystyle{unsrtnat}
\bibliography{references}

\end{document}